\newcommand{\ignore}[1]{}
\newtheorem{thm}{Theorem}
\newtheorem{lemma}[thm]{Lemma}
\newtheorem{definition}[thm]{Definition}
\newcommand{\N}{\ensuremath{\mathbb{N}}}
\newcommand{\Z}{\ensuremath{\mathbb{Z}}}
\newcommand{\bool}{\ensuremath{\{0,1\}}}
\newcommand{\ineq}{\ensuremath{A\vec{x}\le\vec{b}}}
\begin{document}

\title{On Lower Bound Methods for Tree-like Cutting Plane Proofs}
\author{
Daniel Apon\thanks{Dept.\ of Computer Science, University of Maryland. Email: dapon@cs.umd.edu.}
}

\maketitle

\begin{abstract}
In the book Boolean Function Complexity by Stasys Jukna~\cite{Jukna12}, two lower bound techniques for Tree-like Cutting Plane proofs (henceforth, ``Tree-CP proofs'') using Karchmer-Widgerson type communication games (henceforth, ``KW games'') are presented: The first, applicable to Tree-CP proofs with \emph{bounded} coefficients, translates $\Omega(t)$ \emph{deterministic} lower bounds on KW games to $2^{\Omega(t/\log n)}$ lower bounds on Tree-CP proof size. The second, applicable to Tree-CP proofs with \emph{unbounded} coefficients, translates $\Omega(t)$ \emph{randomized} lower bounds on KW games to $2^{\Omega(t/\log^2 n)}$ lower bounds on Tree-CP proof size.

The textbook proof in the latter case uses a $O(\log^2 n)$-bit randomized protocol for the \emph{GreaterThan} function. However in~\cite{Nisan93}, Nisan mentioned using the ideas of~\cite{FPRU90} to construct a $O(\log n + \log (1/\epsilon))$-bit randomized protocol for \emph{GreaterThan}. Nisan did not explicitly give the proof, though later results in his paper assume such a protocol.

In this short exposition, we present the full $O(\log n + \log (1/\epsilon))$-bit randomized protocol for the \emph{GreaterThan} function based on the ideas of~\cite{FPRU90} for ``noisy binary search.'' As an application, we show how to translate $\Omega(t)$ \emph{randomized} lower bounds on KW games to $2^{\Omega(t/\log n)}$ lower bounds on Tree-CP proof size in the \emph{unbounded coefficient} case. This equates randomness with coefficient size for the Tree-CP/KW game lower bound method.

We believe that, while the $O(\log n + \log (1/\epsilon))$-bit randomized protocol for \emph{GreaterThan} is a ``known'' result, the explicit connection to Tree-CP proof size lower bounds given here is new.
\end{abstract}

\section{Introduction}

In this short exposition, we present a \emph{lower bound method} for Tree-like Cutting Planes proofs, without restriction on the size of coefficients in the proof. Our intention is to both clarify and strengthen similar lower bound methods presented in Chapter 19 of Stasys Jukna's book, Boolean Function Complexity~\cite{Jukna12}. In particular, Jukna shows two lower bound methods. The first technique, for any fixed predicate $f$, maps $\Omega(t)$ \emph{deterministic} lower bounds on Karchmer-Widgerson type communication games for $f$ to $2^{\Omega(t/\log n))}$ lower bounds on the size of Tree-like Cutting Planes proofs of $f$ with \emph{bounded coefficients}. The second technique is a modification of the first, yielding that $\Omega(t)$ \emph{randomized} lower bounds on Karchmer-Widgerson type communcation games for $f$ map to $2^{\Omega(t/\log^2 n)}$ lower bounds on the size of Tree-like Cutting Planes proofs of $f$ with \emph{arbitrarily large coefficients}.

Using the existence of a $O(\log n + \log (1/\epsilon))$-bit randomized protocol for the \emph{GreaterThan} function (henceforth, $GT_n$ for inputs of bit-length $n$), we show how to improve the latter. Namely, we show how $\Omega(t)$ \emph{randomized} lower bounds on Karchmer-Widgerson type communication games for some $f$ map to $2^{\Omega(t/\log n))}$ lower bounds on the size of Tree-like Cutting Planes proofs of $f$ with \emph{arbitrarily large coefficients}. We define these specific terms below. Our definition of communication complexity is standard; see~\cite{KN96} or~\cite{Jukna12} for a general introduction.

In Section 2, we prove the necessary lemmas to establish the $O(\log n + \log (1/\epsilon))$-bit randomized protocol for $GT_n$. In Section 3, we give our main claim: the full lower bound method and proof of correctness.

\section{Lemmas in Communication Complexity}

We begin with definitions of the communication problems we are interested in and lemmas bounding their communication complexity. Nisan~\cite{Nisan93} originally suggested using the ideas of~\cite{FPRU90} to obtain an efficient $GT_n$ protocol. For completeness, we develop the full $GT_n$ protocol and proof. All randomized communication is in the public-coin model (Both players see all random coin flips).

In what follows, we use the existence of a $O(\log n)$-bit protocol with error $O(1/ n)$ for $GT_n$ to give an improved lower bound for Karchmer-Wigderson type communication games.

\begin{definition}[$EQ_n$]
Alice gets $x\in\bool^n$, Bob gets $y\in\bool^n$, and they want to decide if $x = y$.
\end{definition}

\begin{definition}[$GT_n$]
Alice gets $x\in\bool^n$, Bob gets $y\in\bool^n$, and they want to decide if $x > y$.
\end{definition}

\begin{lemma}\label{lemma:EQ-upper-bound}
$EQ_n$ has a randomized protocol with error $\epsilon$ and communication $O(\log(1/\epsilon))$.
\end{lemma}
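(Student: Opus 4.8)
The plan is to use the standard fingerprinting / hashing argument in the public-coin model. First I would fix a prime $p$ with $n \le p \le 2n$ (say), and interpret both $x$ and $y$ as integers in $\{0, 1, \dots, 2^n - 1\}$, or alternatively as degree-$(n-1)$ polynomials over a field. The key idea is that equality of two distinct $n$-bit strings fails to be detected by a random hash only with small probability, and we can drive this probability below $\epsilon$ using $O(\log(1/\epsilon))$ public random bits and $O(\log(1/\epsilon))$ bits of communication.

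Concretely, I would have the players use the public coins to sample a uniformly random $r \in \mathbb{F}$, where $\mathbb{F}$ is a finite field of size $q = \Theta(n/\epsilon)$ (so $\log q = O(\log n + \log(1/\epsilon))$). Viewing $x = (x_0, \dots, x_{n-1})$ as the polynomial $P_x(z) = \sum_i x_i z^i$ over $\mathbb{F}$, Alice computes $P_x(r)$ and sends it to Bob using $\lceil \log_2 q \rceil = O(\log n + \log(1/\epsilon))$ bits; Bob checks whether $P_x(r) = P_y(r)$ and outputs accordingly. If $x = y$ the protocol is always correct; if $x \ne y$ then $P_x - P_y$ is a nonzero polynomial of degree $< n$, so it has at most $n - 1$ roots, and the probability that $r$ is one of them is at most $(n-1)/q \le \epsilon$. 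This already gives communication $O(\log n + \log(1/\epsilon))$, which suffices for everything downstream, but to get the sharper $O(\log(1/\epsilon))$ bound stated in the lemma one should instead use the public coins to pick a uniformly random string $s \in \bool^k$ with $k = O(\log(1/\epsilon))$ and have Alice send the single bit $\langle x, s\rangle \bmod 2$; repeating $k$ times (amortized into the same $s$-block), distinct inputs disagree on a random parity with probability $1/2$, so $k = \log(1/\epsilon)$ independent trials bring the error to $\epsilon$ with total communication and randomness $O(\log(1/\epsilon))$.

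The main obstacle, such as it is, is purely bookkeeping: making sure the error analysis is clean (a single nonzero parity test has error exactly $1/2$, so $\lceil \log_2(1/\epsilon)\rceil$ independent tests give error at most $\epsilon$) and confirming the protocol is one-sided (perfect completeness), which is what will be needed when this lemma is invoked inside the $GT_n$ construction. I would state the final protocol as: Alice and Bob use public randomness to sample $s_1, \dots, s_k \in \bool^n$ with $k = \lceil \log_2(1/\epsilon)\rceil$; Alice sends the $k$-bit vector $(\langle x, s_j\rangle \bmod 2)_{j=1}^k$; Bob accepts iff this equals $(\langle y, s_j\rangle \bmod 2)_{j=1}^k$. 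Correctness and the communication bound $k+1 = O(\log(1/\epsilon))$ are then immediate.
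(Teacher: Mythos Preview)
Your final protocol---sampling $k=\lceil\log_2(1/\epsilon)\rceil$ public random strings $s_1,\dots,s_k\in\bool^n$ and having Alice send the $k$ parity bits $\langle x,s_j\rangle\bmod 2$---is exactly the paper's proof, with the same one-sided error analysis. The polynomial-fingerprinting detour is correct but unnecessary here; you can drop it and state the parity protocol directly.
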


\begin{lemma}\label{lemma:GT-upper-bound}
$GT_n$ has a randomized protocol with error $\epsilon$ and communication $O(\log n + \log(1/\epsilon))$.
\end{lemma}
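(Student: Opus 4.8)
The plan is to reduce $GT_n$ to a \emph{noisy binary search} for the most significant bit on which $x$ and $y$ differ, run that search with only a constant error probability per comparison — so each comparison costs $O(1)$ bits via the $EQ$ protocol of Lemma~\ref{lemma:EQ-upper-bound} — and argue that the search is robust enough that $O(\log n+\log(1/\epsilon))$ comparisons suffice for overall error $\epsilon$. The point is that the $\log(1/\epsilon)$ term must enter \emph{additively}; naive amplification of a constant-error $GT_n$ protocol would only give $O(\log n\cdot\log(1/\epsilon))$.

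First I would set up the reduction. Write $x_{\le i}$, $y_{\le i}$ for the length-$i$ prefixes (most significant bits first). The predicate ``$x_{\le i}=y_{\le i}$'' is monotone decreasing in $i$, so there is a threshold $\ell=\ell(x,y)\in\{0,1,\dots,n\}$, the length of the longest common prefix, with $x_{\le i}=y_{\le i}$ iff $i\le\ell$. If $\ell=n$ then $x=y$ and the players output ``not greater''; otherwise $x>y$ iff $x_{\ell+1}=1$, which Alice and Bob decide by exchanging one bit once $\ell$ is known. So it suffices to identify $\ell$. A single ``probe at position $i$'' — deciding ``$x_{\le i}=y_{\le i}$?'' — is exactly an instance of $EQ_i$, which by Lemma~\ref{lemma:EQ-upper-bound} can be answered with any fixed constant error probability, say $p=1/8$, using $O(1)$ bits; using fresh public randomness for each probe makes the probe outcomes independent.

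Next I would run the noisy binary search of~\cite{FPRU90} to locate $\ell$ from such probes. Maintain weights $w_0,\dots,w_n$, initially uniform, representing a running posterior over $\ell$; for $T$ rounds, probe the weighted-median position $m$, and multiply $w_i$ by $1-p$ for every $i$ consistent with the observed answer and by $p$ for every inconsistent $i$; finally output $\arg\max_i w_i$ and run the one-bit comparison above. The correctness analysis is the crux. I would track the scale-invariant potential $\Phi_t=\sum_{i\ne\ell}w_i/w_\ell$ (so renormalization is irrelevant). Because we always probe the \emph{weighted} median, a correct answer multiplies the surviving ``wrong-side'' weight by roughly $1-p$ while slashing the far side by $p$, so in expectation $\Phi_t$ contracts by a constant factor per round; a Chernoff/Azuma bound on the number of incorrect probes over $T$ rounds then shows $\Phi_T<1$ — i.e., $\ell$ carries strictly the largest weight — except with probability at most $\epsilon$, provided $T=O(\log n+\log(1/\epsilon))$. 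The $\log n$ term pays down the initial $\Phi_0=n$; the $\log(1/\epsilon)$ term buys the concentration. (One fiddly point: the weighted median may be a heavy atom, so the split need not be balanced; this is handled separately, e.g.\ by observing that such an atom is already a confident guess and can be output directly.) Summing the $O(1)$-bit probe cost over $T$ rounds, plus the $O(1)$ bits of the reduction, gives total communication $O(\log n+\log(1/\epsilon))$, as claimed. I expect the concentration step in this last paragraph to be the main obstacle — precisely, making rigorous that constant-error probes fed into a weighted-median search drive $\Phi$ below $1$ within $O(\log n+\log(1/\epsilon))$ rounds, with the $\epsilon$-dependence additive.
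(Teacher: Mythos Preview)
Your proposal is correct and shares the paper's high-level plan --- reduce $GT_n$ to locating the longest common prefix via constant-error $EQ$ probes, then run a noisy binary search so that $O(\log n+\log(1/\epsilon))$ probes suffice --- but your noisy-search \emph{implementation} is different. You run the multiplicative-weights/weighted-median version and track a potential $\Phi_t=\sum_{i\ne\ell}w_i/w_\ell$; the paper instead takes the depth-$\log n$ binary-search tree, appends to each leaf a chain of $c\log(1/\epsilon)$ dummy nodes, and has the players perform a backtracking walk of length $m=\Theta(\log n+\log(1/\epsilon))$: at each step they run $EQ$ with error $1/4$ to verify the last move was correct, retreating one node if not, and on a chain they repeatedly check the prefix left of the candidate index. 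A plain Chernoff bound on forward minus backward steps shows the walk reaches depth $>\log n$ (hence some chain) with probability $\ge 1-\epsilon$, and once on the correct chain the players provably never leave it. Your Bayesian version is tidier conceptually --- the $\log n$ and $\log(1/\epsilon)$ terms both live in the single quantity $\Phi$, and there is no ad~hoc chain gadget --- but, as you flag, the heavy-atom case and the concentration of $\log\Phi_t$ require real care. The paper's walk-with-backtracking trades that for a one-line Chernoff argument, at the cost of the chain trick; both are faithful to~\cite{FPRU90}.
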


\begin{proof}[Proof of Lemma~\ref{lemma:EQ-upper-bound}]
The (public) random coins specify $k=\log(1/\epsilon)$ random $n$-bit strings $r_1, ..., r_k$. Each party computes inner products of its input with each of the $r_i$, and compare inner products modulo 2. This requires Alice sending $k=\log(1/\epsilon)$ bits to Bob. If $x=y$, the inner products are all equal. If $x\ne y$, then the inner products will all be equal with probability $2^{-k}=\epsilon$.
\end{proof}

\begin{proof}[Proof of Lemma~\ref{lemma:GT-upper-bound}]
First, we describe a randomized, binary search type protocol for $GT_n$ with $\epsilon$ error and $O(\log^2 (n/\epsilon))$ communication. Then, we show how to modify this protocol to achieve a randomized protocol for $GT_n$ with $\epsilon$ error and $O(\log n + \log (1/\epsilon))$ communication. Both protocols use, as a subroutine, the protocol of Lemma~\ref{lemma:EQ-upper-bound} -- a randomized protocol for $EQ_m$ with $\delta$ error and $O(\log (1/\delta))$ communication, for some $\delta = \delta(\epsilon)$ to be determined later.\medskip

\noindent
{\bf The $O(\log^2 (n/\epsilon))$ protocol.} Alice and Bob perform a binary search for the most significant bit in which $x = x_1x_2\cdots x_n$ and $y = y_1y_2\cdots y_n$ differ, beginning at position $\lceil n/2\rceil$. During the search, at the position of bit $j\in [n]$, Alice and Bob run
\[
EQ_{n-j+1}(x_ix_{i+1}\cdots x_j, y_iy_{i+1}\cdots y_j)
\]
for some position $i$ depending on $j$ and the transcript of the search up to this point. E.g. $i = 1$ until the search moves right after some test, etc. The search moves in the natural way -- left if the left-half is unequal, and right otherwise.

In total, this costs $O(\log(1/\delta))\log n$ communication for error $\delta$ independently in each invocation of the $EQ$ protocol. This protocol only succeeds iff \emph{all} of the $\log n$ invocations of $EQ$ succeed. To manage the accumulation of error across the $\log n$ rounds, we set $\delta := \epsilon/2n$ to obtain total error of $(\epsilon/2n)^{\log n} < .8\epsilon < \epsilon$. Thus, the total cost is $O(\log(n/\epsilon))\cdot\lceil\log n\rceil = O(\log^2 (n/\epsilon))$ as desired.\medskip

\noindent
{\bf The $O(\log n + \log (1/\epsilon))$ protocol.} To improve the above protocol, consider the depth $\log n$ protocol tree $T$ that describes every path the above $O(\log^2 (n/\epsilon))$ protocol may take. Attach to each leaf (corresponding to the points where the above protocol halts) a descending chain of $c\log(1/\epsilon)$ nodes, for some constant $c\in\N$.

Now, Alice and Bob perform a random, descending walk on $T$ of length $m$. Upon entering every node $u$, they run the corresponding $EQ$ protocol to verify they have moved the correct direction using -- in contrast to the $O(\log^2 (n/\epsilon))$ protocol -- error rate $\delta := 1/4$, which costs 2 bits per invocation. If they detect an error, they backtrack to the parent of $u$ and continue as if having just entered $u$'s parent.

Moreover, upon arriving at a chain (or specifically, at a given leaf node at the head of a chain), they first transmit the single bit corresponding to the supposed most-significant bit in which $x$ and $y$ differ. Note that this index is the label of the matching leaf node by construction. Denote some such, fixed index as $i$. If $x_i$ and $y_i$ are the same, they backtrack. Otherwise, they proceed in subsequent rounds by moving down the chain one step and (repeatedly) running the protocol for $EQ_{i-1}(x_1x_2\cdots x_{i-1}, y_1y_2\cdots y_{i-1})$. If they ever detect inequality to the left of index $i$, they backtrack correspondingly.

Let $m_f + m_b = m$, where $m_f$ is the number of forward steps and $m_b$ is the number of backward steps. We want to show that by setting $m$ large enough, we have 
\begin{itemize}
\item with probability at least $1-\epsilon$, $m_f - m_b > \log n$, implying we have reached a node in one of the chains; and
\item except with arbitrarily low probability, if we halt on a chain, it is the correct one.
\end{itemize}

By Chernoff's bound, setting $m := c'\log(n/\epsilon) = c'(\log n + \log(1/\epsilon))$ for $c' < c$ and $c, c'$ sufficiently large immediately gives the first property. The second property follows by observing that, (i) if the parties reach the correct chain they never leave it, and (ii) if they reach an incorrect chain, they leave it within the next $c'' < c'$ steps with probability $1-(1/4)^{c''}$, so by adjusting $c'$ sufficiently large, we can reduce the chance of reaching the wrong chain to a arbitrarily low chance. Hence, the constants $c, c', c''$ can be jointly adjusted sufficiently large so that the total error is at most $\epsilon$. 

Finally, Alice and Bob send $2$ bits in each of the $m$ rounds, giving total communication of $2m = O(\log n + \log(1/\epsilon))$ as desired.
\end{proof}

\section{Cutting Planes and Karchmer-Widgerson Games}

The Cutting Plane proof system can be viewed as a ``high-dimensional geometric generalization'' of the more well-known Resolution proof system, modeling the ability of an integer program (rigorously defined) to detect unsatisfiable instances given as input. The idea is to start with an integer program \emph{defined} by an \emph{unsatisfiable} system of inequalities \ineq\ with integer coefficients, and use a few basic deductive rules to prove that the set of ``cutting planes'' in high-dimensional space, \ineq, does not have a 0-1 solution. We refer the reader to the book Boolean Function Complexity by Jukna~\cite{Jukna12} for a thorough treatment of the essentials of Cutting Planes and Resolution. We note that the Cutting Plane proof system is strictly stronger than Resolution. Here we give a brief introduction to Cutting Planes proofs (henceforth, ``CP proofs'').

\begin{definition} A \emph{CP Proof that $\ineq\not\in$ SAT} is a directed, acyclic graph composed of linear inequalities arranged into consecutive rows (or ``lines'') such that,
\begin{enumerate}
\item On the first line, you have a list of each singleton inequality in \ineq, called \emph{axioms};

\medskip\noindent\emph{And on each subsequent line:}
\item \emph{(Inequality Addition.)} If $\vec{a}_1\cdot\vec{x}\le c_1$ and $\vec{a}_2\cdot\vec{x}\le c_2$ are on prior lines, then $(\vec{a}_1+\vec{a}_2)\cdot\vec{x}\le c_1+c_2$ can be on this line;
\item \emph{(Scalar Multiplication.)} If $\vec{a}\cdot\vec{x}\le c$ is on a prior line and $d\in\N$, then $d(\vec{a}\cdot\vec{x})\le dc$ can be on this line (Also, if $d\in\Z/\N$, then reverse the inequality; i.e. $d(\vec{a}\cdot\vec{x})\ge dc$ can be on this line);
\item \emph{(Rounded Division.)} If $c(\vec{a}\cdot\vec{x})\le d$ is on a prior line, then $\vec{a}\cdot\vec{x}\le\lfloor\frac{d}{c}\rfloor$ can be on this line;

\medskip\noindent\emph{Finally:}
\item The last line is a single, arithmetically false statement, e.g. $1\le 0$.
\end{enumerate}
\end{definition}

Note that a CP proof of $\ineq\not\in SAT$ implies, indeed, $\ineq\not\in SAT$; the converse is also true. The following definition is standard.

\begin{definition} A \emph{Tree-like Cutting Planes Proof} (henceforth, ``Tree-CP proofs'') is a CP proof whose underlying graph is a tree.
\end{definition}

The connection between lower bounds on Tree-CP proof size and communication complexity comes by the so-called ``KW games,'' or \emph{Karchmer-Wigderson games}~\cite{KW90}. In fact, many -- if not all -- of known exponential lower bounds on Tree-CP proofs of various statements use KW games as the underlying lower bound method.

\begin{definition} A \emph{KW game for \ineq} is a communication complexity game of the following form: Given an unsatisfiable system \ineq, fix a partition of its variables into two parts, $P_1, P_2$. For an assignment $\alpha\in\bool^n$, Alice gets the projection of $\alpha$ onto $P_1$, Bob gets the projection of $\alpha$ onto $P_2$, and their goal is to find an inequality \emph{falsified} by $\alpha$.
\end{definition}

\subsection{An Improved Lower Bound Method for Tree-CP Proofs}

Now we will show how communication complexity lower bounds in KW games translate to Tree-CP size lower bounds. The intention of this first definition is that the threshold decision tree corresponding to a given KW game for some fixed \ineq\ can be viewed as a ``search tree'' solving the KW game.

\begin{definition} A \emph{threshold decision tree} is a rooted, directed tree whose vertices are labeled by (degree-1) \emph{threshold functions} $f$ defined by the property
\[
f(x) = 1 \text{ if and only if } a_1x_1 + \cdots + a_nx_n \le b
\]
with integer coefficients $a_1, ..., a_n, b$, and edges labeled with either 0 or 1. The leaves of the tree are labeled with axioms.
\end{definition}

The following lemma is found in Jukna's~\cite{Jukna12} introduction to CP proof size lower bound methods.

\begin{lemma}[\cite{Jukna12}]\label{lemma:TCPtoTDT} If an unsatisfiable system \ineq\ has a Tree-CP proof of size $S$, then it has a threshold decision tree of depth $D = O(\log S)$.
\end{lemma}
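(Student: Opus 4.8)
The plan is to build the threshold decision tree directly from the structure of the Tree-CP refutation, using a standard "top-down path-following" argument. Given a Tree-CP proof $\Pi$ of $\ineq\not\in\mathrm{SAT}$ with underlying tree of size $S$, the leaves of $\Pi$ are axioms and the root is the false inequality $1\le 0$. The key semantic invariant is this: for any $0/1$ assignment $\alpha$, since $\alpha$ falsifies the root inequality and satisfies none-false means it must falsify at least one axiom, and moreover every CP deduction rule (inequality addition, scalar multiplication, rounded division) is \emph{sound}, so if $\alpha$ falsifies the inequality on some line, it must falsify at least one of that line's premises. Thus, starting from the root and repeatedly moving to a premise falsified by $\alpha$, we trace a root-to-leaf path in $\Pi$ ending at an axiom falsified by $\alpha$. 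The threshold decision tree will simulate the \emph{choice} made at each such step.

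The next step is to convert this ``follow a falsified premise'' procedure into a legitimate threshold decision tree, i.e.\ one whose internal nodes query a single degree-$1$ threshold function on $\alpha$. At an inequality-addition node with premises $\vec a_1\cdot\vec x\le c_1$ and $\vec a_2\cdot\vec x\le c_2$, note that if $\alpha$ falsifies the sum $(\vec a_1+\vec a_2)\cdot\vec x\le c_1+c_2$ then it falsifies at least one premise; querying the single threshold $\vec a_1\cdot\vec x\le c_1$ tells us which branch to descend (if it holds, descend into premise $2$, else into premise $1$). Scalar multiplication and rounded division are \emph{unary} rules: their single premise is falsified whenever the conclusion is, so these nodes contribute no branching and can simply be contracted. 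Hence each \emph{binary} node of $\Pi$ is replaced by one threshold-query node, unary nodes are suppressed, and the leaves of $\Pi$ (axioms) become the leaves of the decision tree, labeled by those axioms, exactly as required by the definition.

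It remains to bound the depth. The decision tree we obtain is, after contracting unary nodes, essentially the tree of \emph{binary branching points} of $\Pi$; in the worst case it can have up to $S$ leaves but arbitrary depth, so a direct reading gives only depth $O(S)$, not $O(\log S)$. The main obstacle is therefore this logarithmic-depth requirement, and the standard fix is to \emph{rebalance}: a classical lemma (Spira's theorem / tree-balancing, exactly as used in Jukna's treatment) shows that any tree of size $S$ contains an edge whose removal splits it into two pieces each of size at most $\tfrac{2}{3}S$; pivoting the decision tree on the threshold inequality sitting at that edge, and recursing on the two pieces, yields a decision tree of depth $O(\log S)$ computing the same path-following function. (One must check that when we ``restrict'' to a subtree we can still safely query the inequality at its root edge and branch on whether $\alpha$ satisfies it, routing to the correct side; this is where soundness of the CP rules is used a second time.) Carrying out the recursion and summing $\log_{3/2} S = O(\log S)$ levels gives the claimed depth $D=O(\log S)$, and since every leaf is still an axiom of $\ineq$, the resulting object is a threshold decision tree in the sense of the definition. $\squareforqed$
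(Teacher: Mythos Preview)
The paper does not actually prove this lemma; it is quoted from Jukna's book without proof, so there is no in-paper argument to compare against. Your argument is the standard one (and is essentially what Jukna does): extract the ``follow a falsified premise'' path from the Tree-CP refutation, turn each binary (addition) node into a single threshold query, contract the unary (multiplication/division) nodes, and then rebalance using the $1/3$--$2/3$ tree-separator trick to force depth $O(\log S)$.

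The argument is correct. The one place that deserves to be spelled out rather than gestured at is the rebalancing step. When you pivot on the inequality $I_v$ at the separator node $v$ and $\alpha$ \emph{satisfies} $I_v$, you recurse on the complement piece $\Pi\setminus T_v$; you need to check that in this smaller tree the invariant ``current node's inequality is falsified by $\alpha$'' is maintained all the way to an \emph{axiom} leaf, and in particular that the path can never dead-end at the stub where $T_v$ was cut off. This holds because the parent $u$ of $v$, if falsified, forces the \emph{sibling} of $v$ to be falsified (since $I_v$ is satisfied), so the walk always bypasses $v$. You allude to this (``this is where soundness of the CP rules is used a second time''), but since it is exactly the point that makes the recursion legitimate, it should be stated explicitly rather than left as a parenthetical.
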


The following lemma is an adaption to our purposes of Lemma 19.11 in Jukna's book~\cite{Jukna12}. Note that when all coefficients in a Tree-CP proof are $poly(n)$-bounded, there is a simple $O(\log n)$-bit, deterministic protocol: Alice just sends the partial sum of her partition to Bob.

\begin{lemma}\label{lemma:TFcomm}
Let $f : \bool^n\rightarrow\bool$ be a (degree-1) threshold function (with, possibly, unbounded coefficients). Then, under any partition, $f$ can be computed with error $O(1/n)$ by a randomized protocol with communication $O(\log n)$.
\end{lemma}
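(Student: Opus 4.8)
The plan is to reduce the one‑inequality communication problem to a single instance of $GT$ on integers whose bit‑length is polynomial in $n$, and then apply Lemma~\ref{lemma:GT-upper-bound}. The first step is to remove the dependence on the (a priori unbounded) coefficient size. I would invoke the classical fact about linear threshold functions (Muroga--Toda--Takasu; see also later refinements) that every $(degree\text{-}1)$ threshold function on $n$ Boolean variables admits a representation $a_1x_1+\cdots+a_nx_n\le b$ with integer coefficients of magnitude at most $2^{O(n\log n)}$. Since the lemma treats $f$ only as a Boolean function, we may replace whatever (possibly astronomically large) coefficient vector is given by such a ``small‑weight'' representation, after which each $a_i$ and $b$ has bit‑length $N=O(n\log n)$.

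Next I would carry out the reduction to $GT$. Fix the partition $P_1,P_2$. Alice holds $x|_{P_1}$ and computes $S_A:=\sum_{i\in P_1}a_ix_i$; Bob holds $x|_{P_2}$ and computes $t:=b-\sum_{i\in P_2}a_ix_i$; then $f(x)=1$ iff $S_A\le t$. Both $S_A$ and $t$ are integers of absolute value at most $W:=|b|+\sum_i|a_i|\le 2^{O(n\log n)}$, so adding the common offset $W$ gives non‑negative integers $u:=S_A+W$ and $w:=t+W$ in $[0,2W]$ with $S_A\le t \iff u\le w \iff \lnot(u>w)$. Encoding $u$ and $w$ as $N'$‑bit strings, $N'=\lceil\log_2(2W+1)\rceil=O(n\log n)$, the players run the $GT_{N'}$ protocol of Lemma~\ref{lemma:GT-upper-bound} with error parameter $\epsilon:=1/(3n)$ and output the negation of its answer. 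This is public‑coin since the $GT$ protocol is, the error is $1/(3n)=O(1/n)$, and the communication is $O(\log N'+\log(1/\epsilon))=O\big(\log(n\log n)\big)+O(\log n)=O(\log n)$, as claimed.

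The only genuinely non‑routine ingredient is the weight‑reduction step: without it one obtains communication $O(\log n+\log N)$ where $N$ is the bit‑length of the coefficients actually presented, which is exactly the ``unbounded'' quantity we must avoid; it is precisely the fact that $f$ is a threshold function of $n$ Boolean variables, hence has a $\mathrm{poly}(n)$‑bit representation, that lets the bound be stated in $n$ alone. The remaining bookkeeping I expect to be straightforward: fixing signs via the offset $W$, padding $u$ and $w$ to the common length $N'$ so that Lemma~\ref{lemma:GT-upper-bound} applies verbatim, converting ``$\le$'' to a negated ``$>$'', and choosing $\epsilon=\Theta(1/n)$ so that the target error is met while the communication stays $O(\log n)$.
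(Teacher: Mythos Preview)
Your argument is correct and follows essentially the same route as the paper: invoke Muroga's bound to replace arbitrary weights by integers of magnitude $2^{O(n\log n)}$, reduce the threshold test to a single $GT$ instance on $O(n\log n)$-bit integers via the local partial sums, and apply Lemma~\ref{lemma:GT-upper-bound} with $\epsilon=\Theta(1/n)$ to get $O(\log n)$ communication. Your version is in fact a bit more careful than the paper's (the offset to handle signs, padding to a common bit-length, and the $\le$ versus $>$ conversion), but the underlying idea is identical.
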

\begin{proof} We use the fact that \emph{any} degree-1 threshold function of $n$ variables can be computed as a threshold function with integer weights of magnitude at most $2^{O(n\log n)}$~\cite{Mur71}, and this is tight up to the $O(\cdot)$~\cite{Has94}.

Thus, each number can be expressed using $m\le O(n\log n)\le O(n^2)$ bits. Each player $P_i$ computes their local sum $x_i$ (over their projection of $\alpha$), then using Lemma~\ref{lemma:GT-upper-bound}, the players can decide if $x_1 > b - x_2$ with error $\epsilon = O(1/n)$ using communication $O(\log m + \log(1/\epsilon)) = O(\log n)$.
\end{proof}

We are now able to prove the main claim. We derive our phrasing of the theorem from its predecessor, Lemma 19.8 of Jukna's book~\cite{Jukna12}.

\begin{thm}\label{lemma:TCP-lower-bound}
If for some partition of the $n$ variables, the KW game for \ineq\ requires $t$ bits of randomized communication, then any Tree-CP proof of \ineq\ (in particular, even with \emph{unbounded coefficients}) must have size $2^{\Omega(t/\log n)}$.
\end{thm}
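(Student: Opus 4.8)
The plan is to chain together the three ingredients already assembled: the structural reduction from Tree-CP proofs to shallow threshold decision trees (Lemma~\ref{lemma:TCPtoTDT}), the cheap randomized evaluation of a single threshold function under any partition (Lemma~\ref{lemma:TFcomm}), and a simulation argument that converts a depth-$D$ threshold decision tree for \ineq\ into a randomized protocol for the KW game with roughly $D \cdot O(\log n)$ bits of communication. Composing these, if \ineq\ had a Tree-CP proof of size $S$, then it would have a threshold decision tree of depth $D = O(\log S)$, which would yield a randomized KW-game protocol of cost $O(D\log n) = O(\log S \cdot \log n)$; since the KW game requires $t$ bits, we get $\log S \cdot \log n = \Omega(t)$, i.e.\ $S = 2^{\Omega(t/\log n)}$.

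The main work is the middle simulation step, and I would carry it out as follows. Assume \ineq\ has a threshold decision tree $\mathcal{T}$ of depth $D$; its leaves are labeled by axioms of \ineq. Alice and Bob, holding the projections of an assignment $\alpha$ onto $P_1$ and $P_2$, walk down $\mathcal{T}$ from the root: at each internal node labeled by a threshold function $f$, they use Lemma~\ref{lemma:TFcomm} to compute $f(\alpha)$ with error $O(1/n)$ using $O(\log n)$ bits, and follow the edge labeled by the (claimed) value of $f(\alpha)$. After at most $D$ steps they reach a leaf, which is labeled by an axiom; they output that axiom. I must argue two things: correctness of the reached leaf assuming all tests succeed, and control of the accumulated error. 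Correctness is the standard invariant for decision trees computing a search problem — the path followed is consistent with $\alpha$, so the leaf's axiom is one falsified by $\alpha$ (this is exactly what it means for the threshold decision tree to ``solve'' the KW game, as noted before the definition). For the error, I would take a union bound over the $D \le$ (depth) evaluations: each errs with probability $O(1/n)$, so if we first boost the success probability of each threshold evaluation to $1 - O(1/(nD))$ — which costs only a constant factor more communication since error reduction in Lemma~\ref{lemma:TFcomm} / Lemma~\ref{lemma:GT-upper-bound} is logarithmic in $1/\epsilon$ and $D = O(\log S) = \mathrm{poly}(n)$ in the interesting regime, so $\log(nD) = O(\log n)$ — the overall protocol has error $o(1)$ and communication $O(D\log n)$.

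The step I expect to be the main obstacle is making the error-budget bookkeeping genuinely clean: one has to be careful that $D$ can be taken polynomially bounded (otherwise $\log(nD)$ is not $O(\log n)$), which requires noting that the theorem is vacuous — or the bound trivially holds — once $S$ is super-exponential in any fixed polynomial in $n$, so we may assume $\log S = \mathrm{poly}(n)$ and hence $D = \mathrm{poly}(n)$; alternatively one can phrase the whole argument contrapositively, starting from a purported small proof. A secondary subtlety is that Lemma~\ref{lemma:TFcomm} is stated for a single threshold function and for the specific KW partition $(P_1,P_2)$, so I should check that the same partition is used at every node of $\mathcal{T}$ — which it is, since the partition is fixed throughout the KW game — and that unbounded coefficients in the Tree-CP proof only manifest as (possibly large but finite) integer coefficients in the threshold functions labeling $\mathcal{T}$, which is precisely the case Lemma~\ref{lemma:TFcomm} was designed to handle. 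Once these points are nailed down, the final inequality $t \le O(\log S \cdot \log n)$, hence $S \ge 2^{\Omega(t/\log n)}$, is immediate.
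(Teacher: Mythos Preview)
Your proposal is correct and follows essentially the same route as the paper: reduce a size-$S$ Tree-CP proof to a depth-$O(\log S)$ threshold decision tree via Lemma~\ref{lemma:TCPtoTDT}, simulate the tree by evaluating each node's threshold function with the $O(\log n)$-bit randomized protocol of Lemma~\ref{lemma:TFcomm}, and compare the resulting $O(\log S \cdot \log n)$ communication cost to the assumed lower bound $t$. Your treatment is in fact more careful than the paper's on the error bookkeeping---the paper simply uses per-node error $O(1/n)$ and asserts overall success $1-o(1)$, whereas you make explicit the union bound, the boosting to per-node error $O(1/(nD))$, and the observation that one may assume $D=\mathrm{poly}(n)$ (since $t\le n$ makes the bound trivial when $S\ge 2^n$).
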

\begin{proof}
Suppose that the given, unsatisfiable system of inequalities, \ineq\, has a Tree-CP proof (with, possibly, unbounded coefficients) of size $S$. By Lemma~\ref{lemma:TCPtoTDT}, \ineq\ has a threshold decision tree of depth $D = O(\log S)$ solving the KW game for \ineq. By Lemma~\ref{lemma:TFcomm}, for error $O(1/n)$, the threshold functions at each node of the tree can be evaluated with communication $c = O(\log n)$. Thus, the total communication complexity of evaluating the threshold decision tree with probability $1-o(1)$ is at most $cD$. By assumption, at least $t$ bits of communication are necessary, so we conclude
\begin{align*}
cD &\ge t\\
O(\log n\log S) &\ge t\\
S &\ge 2^{\Omega(t/\log n)}.\qedhere
\end{align*}
\end{proof}

\section{Acknowledgements}

We thank William Gasarch and Stasys Jukna for useful discussions and proofreading.

\def\shortbib{0}

\end{document}